\newcommand{\commentout}[1]{}
\newcommand{\defin}[1]{\textbf{#1}}
\newcommand{\lthen}{\rightarrow}
\renewcommand{\phi}{\varphi}
\renewcommand{\models}{\vDash}
\newcommand{\nmodels}{\nvDash}
\newcommand{\dimp}{\Leftrightarrow}
\newtheorem{theorem}{Theorem}
\newtheorem{proposition}[theorem]{Proposition}
\theoremstyle{definition}
\newtheorem{example}[theorem]{Example}
\title{Endogenizing Epistemic Actions}
\author{
Will Nalls
\institute{Carnegie Mellon University\\
      Pittsburgh, USA}
    \email{will.nalls@gmail.com}
\and
Adam Bjorndahl
\institute{Carnegie Mellon University\\
      Pittsburgh, USA}
	\email{abjorn@andrew.cmu.edu}
}
\begin{document}

\maketitle

\begin{abstract}
Through a series of examples, we illustrate some important drawbacks that the \emph{action logic} framework suffers from in its ability to represent the dynamics of information updates. We argue that these problems stem from the fact that the \emph{action model}, a central construct designed to encode agents' uncertainty about actions, is itself effectively common knowledge amongst the agents. In response to these difficulties, we motivate and propose an alternative semantics that avoids them by (roughly speaking) endogenizing the action model. We discuss the relationshop to action logic, and provide a sound and complete axiomatization.
\end{abstract}






\section{Introduction}

\emph{Action Logic} (AL) is a framework for reasoning about how knowledge and belief changes on the basis of incoming information \cite{BMS98,BM04,vDvdHK08}.\footnote{Terminology varies; some authors instead use \emph{Dynamic Epistemic Logic} to refer to this framework \cite[\S 2.2.2]{KP}. We follow van Ditmarsch et al.~\cite{vDvdHK08} in using it instead as an umbrella term for a collection of thematically related logics of information change, including Action Logic.}
Information is conveyed in the form of ``epistemic actions'', with a canonical example being \textit{public announcements} \cite{Plaza07}.
Unlike \emph{Public Announcement Logic} (PAL), however, AL does not presume that all epistemic actions are public in the sense of becoming common knowledge, nor that such actions can always be distinguished from one another by the agents. Uncertainty about epistemic actions is explicitly encoded in a structure called an \emph{action model}; this allows for the representation of scenarios in which agents can be uncertain about which action has taken place.

By design, this formalism is well-equipped to capture uncertainty about actions themselves; however, we argue in this paper that the AL framework is ill-suited to the representation of \textit{higher-order} uncertainty about actions.
Roughly speaking,
this is because the action model that captures uncertainty about actions is itself effectively common knowledge amongst the agents,
making it awkward
to encode, for example, one agent's uncertainty about another agent's uncertainty about actions.

We expose this difficulty through a series of motivating examples. We demonstrate that AL can capture higher-order uncertainty about actions only by expanding the action model in such a way as to essentially ``pre-encode'' the desired uncertainty; this makes choosing an appropriate action model for any given application problematically post hoc. Furthermore, we show that in such cases small variations in the background epistemic conditions require corresponding alterations to the action models in order to ensure that the ``pre-encoded'' uncertainty maintains the right form. These observations seriously undermine the practical applicability of AL as a tool for reasoning about information updates.

In response to these challenges, we reformulate the semantics by
``endogenizing''
the action model; somewhat more precisely, we allow each agent's uncertainty about actions to be world-dependent, and therefore itself subject to uncertainty. Revisiting our examples, we show that these revised semantics completely circumvent the earlier
difficulties;
our semantics capture \textit{formally} the informal process underlying the aforementioned post hoc expansion of the action model.

The idea of representing higher-order uncertainty about epistemic actions by encoding extra information about the agents (and their perceptions of such actions) into the state space is a natural one; it occurs also in \cite{BvDHLPS16}, in which Bolander et al.~study a more general class of announcements that may not be entirely public in that (loosely speaking) some agents may not be ``listening''. In essence, their semantics work by encoding into each possible world whether or not each agent is ``paying attention'' at that world. As the authors show, such ``attention-based'' announcements can also be described using action models. The present work is more general in that we begin with the full action model framework rather than PAL, and we encode into each world the full spectrum of each agent's uncertainty regarding epistemic actions, not just whether or not they are attentive.

The rest of the paper is organized as follows. In Section \ref{sec:act}, we motivate and define the basic AL framework, and present a series of examples intended to illustrate its limitations. In Section \ref{sec:adj}, we present our new semantics and show how it deals with the problems discussed in the previous section. Section 4 presents an axiomatization.

\section{Action Logic} \label{sec:act}

We begin by reviewing the foundational definitions and motivations of AL, largely following van Ditmarsch et al.~ \cite[Chapter 6]{vDvdHK08}.%
%
%
%
%
\commentout{
One line of investigation in formal epistemology takes epistemic scenarios to be captured by Kripke models. 
The particular Kripke models used here will be called \emph{epistemic models}:

\begin{mydef}
	An \defin{epistemic model} is a structure of the form $M = \langle W, \{\sim_j: j\in G\},V\rangle $, where:

\begin{itemize}
	\item $W$ is a set of possible states.
	\item $G$ is a set of agents.
	\item $\sim_j$ is an equivalence relation on $W$. $w_0 \sim_a w_1$ should be read as `agent a cannot distinguish $w_0$ from $w_1$'.
		We will use $[w_0]^a$ to denote the equivalence class of $w_0$ under $\sim_a$.
	\item $V$ is a function from $\mathbf{Prop}$, a countable set of propositions, to sets of states. $V(p)$ is the set of states in $W$ where $p$ is true.\footnote{Note that these propositions are intended to be ``facts about the world'', rather than ``facts about agents' epistemic states'' -- these will be captured by means of the relations.} 
		$V$ is extended to \eval{}, a valuation over the language $\mathcal{L}_{St}$.
\end{itemize}

\end{mydef}

We will call the language for the scenarios captured by epistemic model the language of epistemic logic,
	denoted $\mathcal{L}_{EL}$:
	
\[ p | \neg \phi | \phi \wedge \psi | K_j \phi \]

Epistemic formulas are evaluated at states in a model.
The semantics for propositions and boolean operators are as expected, and the semantics for $K_j \phi$ are given by:

\[ (M,w) \vDash K_j \phi \text{ just when } (\forall w'\in [w]^j)(M,w')\vDash \phi \]

$(M,w)\vDash K_j \phi$ may be read as `at world $w$, agent $j$ knows that $\phi$'.\footnote{In cases where the model is salient, we may write $w\vDash \phi$ instead of $(M,w) \vDash \phi$. Also, I will use $\hat{K_j}$ to represent the dual of $K_j$.}
The conditions on the indistinguishability relations -- that they be equivalence relations -- result in the knowledge operators having paricular properties. 
It is well-known that this class of models is axiomatized by the system \textbf{S5}.

Taking epistemic models to be representations of epistemic scenarios, a study of epistemic dynamics -- movements between epistemic scenarios -- will look at \emph{movements between epistemic models}.
Such movements may be the result of different \emph{epistemic actions}, or actions that cause changes in agents' epistemic states. 
The action of publically announcing a truth to all agents has been captured by \emph{Public Announcement Logic} (PAL).\footnote{This logic was originally proposed in \cite{plaza89}. A helpful exposition and summary of results may be found in \cite{vDvdHK08}. PAL has been axiomatized,
	and is reducible to epistemic logic.}

To illustrate this type of epistemic action, consider a scenario where colleagues Anne and Bob are discussing whether 
	a particular company policy passed in this morning's board meeting: both are ignorant of whether or not the policy passed. 
Taking $p$ to be the proposition `the policy passed', this scenario may be represented with the following epistemic model:\footnote{Since all epistemic models will be, unless stated otherwise, equivalence relations, reflexive loops and relations resulting from transitivity will be suppressed in the diagrams.}

\includegraphics[scale=.25 ]{diagrams/.eps}

$w_0$ and $w_1$ are the names of the possible worlds; $p$ and $\neg p$ are true at these worlds, respectively.
The edge labelled with $a$ and $b$ denotes that either agent cannot distinguish between these possible worlds. 
As desired, the model satisfies $M\vDash (\neg K_a p \wedge \neg K_b p) \wedge (\neg K_a \neg p \wedge \neg K_b \neg p)$.
Now consider the epistemic action characterized by a friend, Carl, coming by and truthfully announcing to the pair that $p$ is true. 
This removes, for both Anne and Bob, the possibility that $p$ is false, resulting in the model:

\includegraphics[scale=.25]{diagrams/d2.eps}

In the resulting model, both agents know that $p$ is true.
$PAL$ captures this sort of dynamic between epistemic models by means of \emph{model restriction}.
That is, in order to evaluate what is true after a public announcement of $\phi$ in a situation modeled by $M$, $PAL$ examines $M|_\phi$, the restriction of $M$ to all worlds where $\phi$ is true.
This restriction maintains all those relations between worlds in which $\phi$ is true -- this means that the only way for an agent to lose uncertainty is for a possibility to be removed for \emph{all} agents.
Clearly, this limits the types of epistemic actions which PAL is suited to model.

For instance, consider the epistemic action characterized by our friend Carl joining Bob and Anne, announcing that he knows whether or not $p$, and whispering to Bob that $p$ is true so that Anne cannot hear.
The result of this action may be captured by the following model:

\includegraphics[scale=1]{diagrams/d3.eps}

Clearly, this cannot be arrived at via node deletion.
Rather, the movement between models is one of \emph{edge deletion} -- the elimination of Bob's uncertainty with regard to $p$.
To account for this dynamic, a more general framework is required; Action Logic provides this framework.
}
%
%
%
%
This logic is an extension of standard epistemic logic, so we begin there.

Fix a countable set of primitive propositions \textsc{prop} and a finite set of agents $G$. Let $\mathcal{L}_{EL}$ denote the language recursively defined as follows:
$$\phi ::= p \, | \, \neg \phi \, | \, \phi \land \psi \, | \, K_j \phi,$$
where $p \in \textsc{prop}$ and $j \in G$. We read $K_j \phi$ as ``agent $j$ knows that $\phi$''. Thus, $\mathcal{L}_{EL}$ is a language for reasoning about the knowledge of the agents in $G$. The other Boolean connectives can be defined in the usual way; we write $\hat{K}_{j}$ to abbreviate $\lnot K_{j} \lnot$, and read $\hat{K}_{j} \phi$ as ``agent $j$ considers it possible that $\phi$''.
 
An \defin{epistemic model} is a structure of the form $M = \langle W, \{\sim_j: j\in G\},V\rangle $, where:
\begin{itemize}
	\item $W$ is a (nonempty) set of \emph{states},
	\item $\sim_j$ is an equivalence relation on $W$,
	\item $V: \textsc{prop} \to 2^{W}$ is a \emph{valuation function}.
\end{itemize}
Intuitively, $V$ specifies for each primitive proposition those states where it is true, while the relations $\sim_{j}$ capture indistinguishability from the perspective of agent $j$. These intuitions are formalized in the following semantic clauses:
$$
\begin{array}{lcl}
(M,w) \models p & \textrm{ iff } & w \in V(p)\\
(M,w) \models \lnot \phi & \textrm{ iff } & (M,w) \nmodels \phi\\
(M,w) \models \phi \land \psi & \textrm{ iff } & (M,w) \models \phi \textrm{ and } (M,w) \models \psi\\
(M,w) \models K_{j} \phi & \textrm{ iff } & (\forall w' \in [w]^{j})((M,w) \models \phi),
\end{array}
$$
where $[w]^{j}$ denotes the equivalence class of $w$ under $\sim_j$. Thus, the Boolean connectives are interpreted as usual, and $K_{j}\phi$ is true at $w$ precisely when $\phi$ is true at all worlds that agent $j$ cannot distinguish from $w$. Insisting that the indistinguishability relations be equivalence relations results in a logic of knowledge that is \emph{factive} and \emph{fully introspective}. For a more thorough development of differing logics of knowledge we direct the reader to \cite{FHMV}.

On top of this basic epistemic framework, Action Logic adds a layer of structure that aims to capture the epistemic dynamics of information update.
An \defin{action model} is a structure of the form
$$A = \langle \Sigma, \{\approx_j : j\in G\}, Pre \rangle$$
where:
\begin{itemize}
	\item $\Sigma$ is a (nonempty) set of \emph{epistemic actions}, 
    \item $\approx_j$ is an equivalence relation on $\Sigma$,
    \item $Pre: \Sigma \to \mathcal{L}_{EL}$ is a \emph{precondition function}.
\end{itemize}
Intuitively, the relation $\approx_{j}$ captures indistinguishability of actions from the perspective of agent $j$, while the function $Pre$ captures the background conditions $Pre(\sigma)$ that must hold for a given action $\sigma$ to be successfully performed. In short, an action model specifies a set of epistemic actions that can be executed together with their preconditions and the extent to which they can be individuated by the agents.

To formalize these intuitions, we must define the process by which an epistemic model $M$ is updated based on the performance of an epistemic action from $A$.
This is captured in the \defin{updated model} 
$$M^A = \langle W_\Sigma , \{ \sim'_j : j\in G\}, V' \rangle$$
defined as follows:
\begin{itemize}
	\item $W_\Sigma = \{ (w,\sigma) \: : \: (M,w)\vDash Pre(\sigma) \}$,
	\item $(w_0,\sigma_0) \sim'_j (w_1,\sigma_1)$ iff $w_0 \sim_j w_1$ and $\sigma_0 \approx_j \sigma_1$,
	\item $(w,\sigma) \in V'(p)$ iff $w \in V(p)$.
\end{itemize}
Thus, the states of the updated model consist of those state-action pairs $(w,\sigma)$ such that the precondition of the action $\sigma$ is satisfied by the state $w$ in $M$; intuitively, $(w,\sigma)$ represents the state of the world $w$ after $\sigma$ has been performed. The definition of $V'$ ensures that such ``updated states'' satisfy the same primitive propositions as they did before (corresponding to the intuition that epistemic actions can change the \textit{information} agents have access to, but cannot change basic facts about the world). Finally, the definition of $\sim'_{j}$ specifies that updated state-action pairs are indistinguishable for agent $j$ precisely when the constituent states and actions were indistinguishable for $j$ in $M$ and $A$, respectively. It is easy to see that $M^A$ is an epistemic model.

The language for Action Logic, $\mathcal{L}_{AL}$, extends the basic epistemic language with an update operator:
$$\phi := p \, | \, \neg \phi \, | \, \phi \wedge \psi \, | \, K_j \phi \, | \, [A,\sigma] \phi.$$
$[A,\sigma] \phi$ is read ``if action $\sigma$ can be performed, then afterwards, $\phi$ is true''.\footnote{The ``if-then'' construction here is interpreted as a standard material conditional, i.e., $[A,\sigma]\phi$ is vacuously true when $\sigma$ cannot be performed.}
This language therefore lets us reason about agents' knowledge and how it can change as a result of epistemic actions. Formulas of $\mathcal{L}_{AL}$ can be interpreted in epistemic models as before, with the additional semantic clause for $[A,\sigma]$ given by:
$$
\begin{array}{lcl}
(M,w) \models [A,\sigma] \phi & \textrm{ iff } & (M,w) \models Pre(\sigma) \textrm{ implies } (M^A, (w,\sigma)) \models \phi.
\end{array}
$$

To make these definitions clear, we present a simple example that will recur in various forms throughout the paper.
\begin{example} \label{exa:sim}
Colleagues Anne and Bob are discussing whether a particular company policy passed in this morning's board meeting: both are ignorant of whether or not the policy passed. Taking $p$ to be the proposition ``the policy passed'', this scenario may be represented with the simple epistemic model $M_{0}$ presented in Figure \ref{fgr:epi0}.\footnote{Unless stated otherwise, all relations are equivalence relations, so reflexive loops and edges implied by transitivity are assumed to be present, even when suppressed in the diagrams.}
\begin{figure}[h]
\centering
\includegraphics[scale=.35]{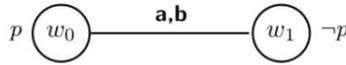}
\caption{$M_{0}$ -- Anne and Bob's initial uncertainty} \label{fgr:epi0}
\end{figure}

In this diagram and others like it, the circles represents the states of the model (in this case, $w_0$ and $w_1$), the formulas listed beside a state are true at that state (in this case, $p$ and $\neg p$ are true at $w_{0}$ and $w_{1}$, respectively), and edges between states are labelled with those agents that cannot distinguish those states (in this case, the edge labelled with $a$ and $b$ indicates that neither Anne nor Bob can distinguish between these two possible worlds).
Indeed, we have $M_{0} \models (\neg K_a p \wedge \neg K_a \neg p) \land (\neg K_b p \wedge \neg K_b \neg p)$, so this model properly captures the ignorance of Anne and Bob regarding $p$.

Along comes Carl, fresh from the board meeting with news of whether the policy passed.
He takes one of two actions: he either tells Bob that $p$ is true ($\sigma_{p}$), or he tells Bob that $p$ is false ($\sigma_{\lnot p}$). Anne watches intently: she knows that Carl is telling Bob what happened, but she is too far away to hear what Carl actually reports. We presume that Carl is honest, so the precondition of $\sigma_{p}$ is $p$, and the precondition of $\sigma_{\lnot p}$ is $\neg p$.
This is all captured in the action model $A_{0}$ depicted in Figure \ref{fgr:act0}, where the circles represent actions, the formulas listed beside the actions are the corresponding preconditions, and the edges represent action indistinguishability.
\begin{figure}[h]
\centering
\includegraphics[scale=.35]{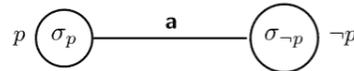}
\caption{$A_{0}$ -- Carl's communication to Bob} \label{fgr:act0}
\end{figure}

In particular, Anne cannot distinguish $\sigma_{p}$ from $\sigma_{\lnot p}$ (though she knows that \textit{one} of them happened), while Bob can. We therefore arrive at the following updated model $M_{0}^{A_{0}}$ given in Figure \ref{fgr:upd0}.
\begin{figure}[h]
\centering
\includegraphics[scale=.35]{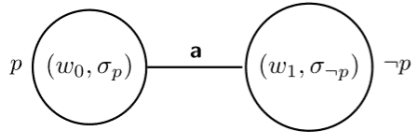}
\caption{$M_{0}^{A_{0}}$ -- After Carl's announcement} \label{fgr:upd0}
\end{figure}
Thus we see that when we update the epistemic model with the action model, the result is a model where Bob knows whether or not $p$, while Anne does not: $M_{0}^{A_{0}} \models K_b p \vee K_b \neg p$, and $M_{0}^{A_{0}} \models \neg (K_a p \vee K_a \neg p)$, as desired. \qed
\end{example}
Note that Carl's announcement is not \textit{public} since Anne cannot tell whether he announced $p$ or $\lnot p$ while Bob can.\footnote{This particular sort of non-public announcement is sometimes referred to as ``semi-private'' in the literature.}
This demonstrates that AL can capture epistemic dynamics that PAL cannot.%
\footnote{In fact, AL subsumes PAL:
to capture a public announcement of $\phi$, consider the action model $A_\phi$ consisting of a single node $\sigma$ with the precondition $\phi$. Then, given an epistemic model $M$ and a state $w$ therein, one can show that the resulting models $(M^{A_\phi},(w,\sigma))$ and $(M|_\phi,w)$ are bisimilar: updating with $A_\phi$ effectively deletes the states in $M$ where $\phi$ is false.
\commentout{
On the other hand,
Carl's announcement
in Example \ref{exa:sim}
cannot be captured by
the dynamics of PAL, since
the update removes an edge from between $w_0$ and $w_1$, but does not delete any
nodes.
}
}

Nonetheless, as we have claimed, AL suffers from limitations of its own; we turn now to a discussion of these limitations.

\subsection{Limitations of action logic} \label{sec:lim}

The limitations we have in mind
can be summarized quite simply: the AL framework treats the action model as common knowledge.
Indeed,
the updated model $M^A$
imports much of the structure of the action model:
in it, all agents come to know what actions the \emph{other} agents can distinguish. The result is that AL has difficulty capturing scenarios involving \emph{higher-order uncertainty}---e.g., uncertainty about what other agents know.
Although this limitation can be overcome, in a sense,
by expanding the action model,
we will see that in general this is not an appealing solution.
Moreover, the examples we consider demonstrate that the AL framework is not as modular as it might appear to be: adjustments to the epistemic model will often require corresponding adjustments to the action model to preserve the intended semantic interpretations. To illustrate these points, we return to the scenario presented in Example \ref{exa:sim}.

\begin{example} \label{exa:fre0}
Suppose again that Carl comes along with news of $p$;
however, instead of speaking so that only Bob can hear him, Carl speaks plainly for all to hear, but he delivers the message in French. As it happens, Bob speaks French and Anne does not.

As before, the apparent actions that Carl might take are telling Bob that $p$ and telling Bob that $\neg p$, which presumably have preconditions $p$ and $\neg p$, respectively.
Bob can distinguish these actions, as he speaks French, while Anne cannot.
This reasoning produces the same action model $A_{0}$ depicted in Figure \ref{fgr:act0}, which therefore produces the same updated model $M_{0}^{A_{0}}$ shown in Figure \ref{fgr:upd0}.
As expected, then, just as in Example \ref{exa:sim}, Bob ends up knowing whether or not $p$, and Anne does not.
An unexpected result, however, is that Anne \textit{knows this former fact}, and Bob knows the latter!
That is, we have:
$$M_{0}^{A_{0}} \models K_a(K_b p \vee K_b \neg p)$$
and
$$M_{0}^{A_{0}} \models K_b(\neg (K_a p \vee K_a \neg p)).$$
The former says that Anne knows that Bob knows whether $p$, while the latter says that Bob knows that Anne is uncertain about $p$.
But there was no assumption that Bob knows that Anne cannot understand Carl's message, nor that Anne knows that Bob can. That is, we did not explicitly stipulate whether or not either knew about the other's (in)ability to speak French. To capture this, the model must be refined. \qed
\end{example}

Loosely speaking,
Bob's ability to speak French and Anne's inability to speak French are represented in the structure of the action model $A_{0}$ in Figure \ref{fgr:act0}, and this is why they effectively become common knowledge in the updated model.
But, of course,
we may want to capture a scenario where one or both are \textit{uncertain} about whether or not the other speaks French;
indeed, uncertainties of this sort play an important role in everyday reasoning.

For the sake of simplicity, let us begin by aiming only to
remove the consequence in the updated model
that Anne knows that Bob knows whether $p$.
\begin{example} \label{exa:fre1}
A relevant proposition here is that Bob speaks French;
call this $q$.
If we wish to account for Anne's uncertainty about $q$, we ought to expand the initial epistemic model $M_{0}$ to include the possible values this proposition might have. The result is the model $M_{1}$ depicted in Figure \ref{fgr:epi1}.
\begin{figure}[h]
\centering
\includegraphics[scale=.35]{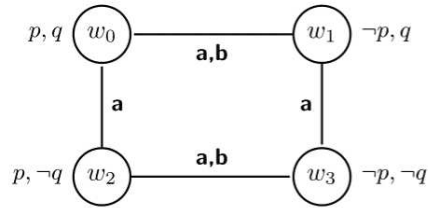}
\caption{$M_{1}$ -- An expanded model of Anne and Bob's initial uncertainty} \label{fgr:epi1}
\end{figure}

Anne does not know whether or not $p$ is true, and also does not know whether or not $q$ is true; thus, there are $a$-edges between all 4 nodes. 
We assume that Bob does know whether or not he speaks French, but, as before, does not know whether or not $p$; thus there are horizontal $b$-edges, but no vertical $b$-edges.

It is easy to check that updating $M_{1}$ with $A_{0}$ produces an epistemic model $M_{1}^{A_{0}}$ in which Bob knows (at every state) the true value of $p$. So it seems we must modify the action model as well; we add
a third action, $\sigma$, corresponding intuitively to the ``unsuccessful'' announcement in which Carl speaks his piece but no one (including Bob) understands him.
The precondition for $\sigma$ should therefore be $\neg q$: that Bob does not speak French.
Furthermore, the preconditions for the actions $\sigma_p$ and $\sigma_{\neg p}$ ought to be strengthened to include $q$, since these actions now represent Carl telling Bob $p$ or $\neg p$ \textit{and Bob understanding what was said}.
Bob will be able to distinguish any of these three actions, since he knows whether or not he speaks French, and, given that he speaks French, he knows which announcement Carl is making. Anne, on the hand, will not be able to distinguish any of the three actions---it all sounds the same to her.
All this is captured by the action model $A_{1}$ given in Figure \ref{fgr:act1}.
\begin{figure}[h]
\centering
\includegraphics[scale=.45]{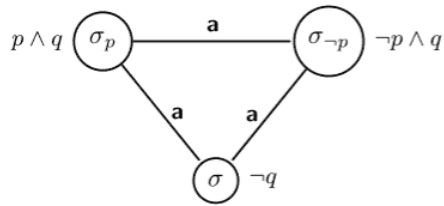}
\caption{$A_{1}$ -- An expanded action model} \label{fgr:act1}
\end{figure}


The updated model $M_{1}^{A_{1}}$ is shown in Figure \ref{fgr:upd1}.
\begin{figure}[h]
\centering
\includegraphics[scale=.4]{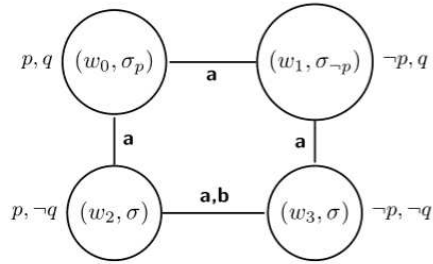}
\caption{$M_{1}^{A_{1}}$ -- After Carl's announcement} \label{fgr:upd1}
\end{figure}

As expected, we now have that Bob knows whether $p$ only in those nodes where $\sigma_p$ or $\sigma_{\neg p}$ was performed, and Anne does not know whether $p$ at all.
Furthermore, we have:
$$(w_0,\sigma_p) \models \hat{K}_a (K_b p \lor K_b \neg p) \land \hat{K}_a (\neg K_b p \land \neg K_b \neg p)$$
This reads: Anne considers it possible that Bob knows whether or not $p$, and also considers it possible that he does not.
Thus, by expanding the initial epistemic model to include Anne's uncertainty about Bob's ability to speak French, as well as adding a third node to the action model corresponding (roughly speaking) to an ``unsuccessful" announcement from Carl, we are able to capture the second-order uncertainty that we set out to capture. \qed
\end{example}

One might reasonably feel some discomfort regarding the introduction of $\sigma$ into the action model. On at least one intuition for what constitutes an ``action'', Carl's announcement (in French) of $p$ ought to count as the \textit{same} action regardless of who hears it or what languages they might understand. In other words, one might object to distinguishing $\sigma_{p}$ from $\sigma$ on the grounds that it builds into the ontology of actions properties that really have nothing to do with actions, but rather with agents.

This philosophical objection could perhaps be swept aside if the underlying formalism actually did the job we wanted it to: it is a hard case to make to let vague ontological concerns trump mathematical efficacy. What we now aim to demonstrate, however, is that this technique of expanding the action model is \textit{not} an effective tool for capturing higher-order uncertainty.

\commentout{
Let us alter the scenario of Examples \ref{exa:fre0} and \ref{exa:fre1} in only one respect: we assume now that Anne already knows whether or not $p$ is true before Carl says anything.
As before, however, Bob is ignorant of $p$, and neither Anne nor Bob knows whether the other can understand French.
The corresponding epistemic model $M_{2}$ is given in Figure \ref{fgr:epi2}.
\begin{figure}[h]
\centering
\includegraphics[scale=.35]{diagrams/d11.eps}
\caption{Anne starts off knowing whether $p$} \label{fgr:epi2}
\end{figure}

How should we adjust the action model $A_{1}$?
Since Anne already knows whether or not $p$ is true (and assumes Carl is truthful), it would seem that Anne can distinguish $\sigma_{p}$ from $\sigma_{\neg p}$.
It is unclear, however, whether she can distinguish either of these actions from $\sigma$.
$\sigma$ represents the action where Carl's announcement goes unappreciated by Bob, because he does not know French.
Since Anne cannot tell whether Bob understood Carl's announcement or not, it would seem that we should keep the $a$-edge between $\sigma_{p}$ and $\sigma$; but since Anne knows whether or not $p$ holds, she \textit{can} distinguish between an announcement of $p$ understood by Bob, and an announcement of $\neg p$ which is not by understood by Bob -- this consideration stands in favor of removing the $a$-edge. 
The problem we see is that the contrived nature of $\sigma$ -- an action which encodes its effects on a particular agent -- renders impossible the representation of any scenario featuring the indistinguishability relations of any other agents. 

The fix to this problem is another adjustment to the action model.
We split the action $\sigma$ into two actions $\sigma_{\neg p \neg q}$ and $\sigma_{p \neg q}$, and relabel the other actions appropriately.
These new nodes represent the actions \textit{Carl announces that $\neg p$ and Bob does not understand} and \textit{Carl announces that $p$ and Bob does understand}.
The adjusted action model is given in Figure \ref{fgr:act2}, and the corresponding updated model is given in \ref{fgr:upd2}.

\begin{figure}[h]
\centering
\includegraphics[scale=.35]{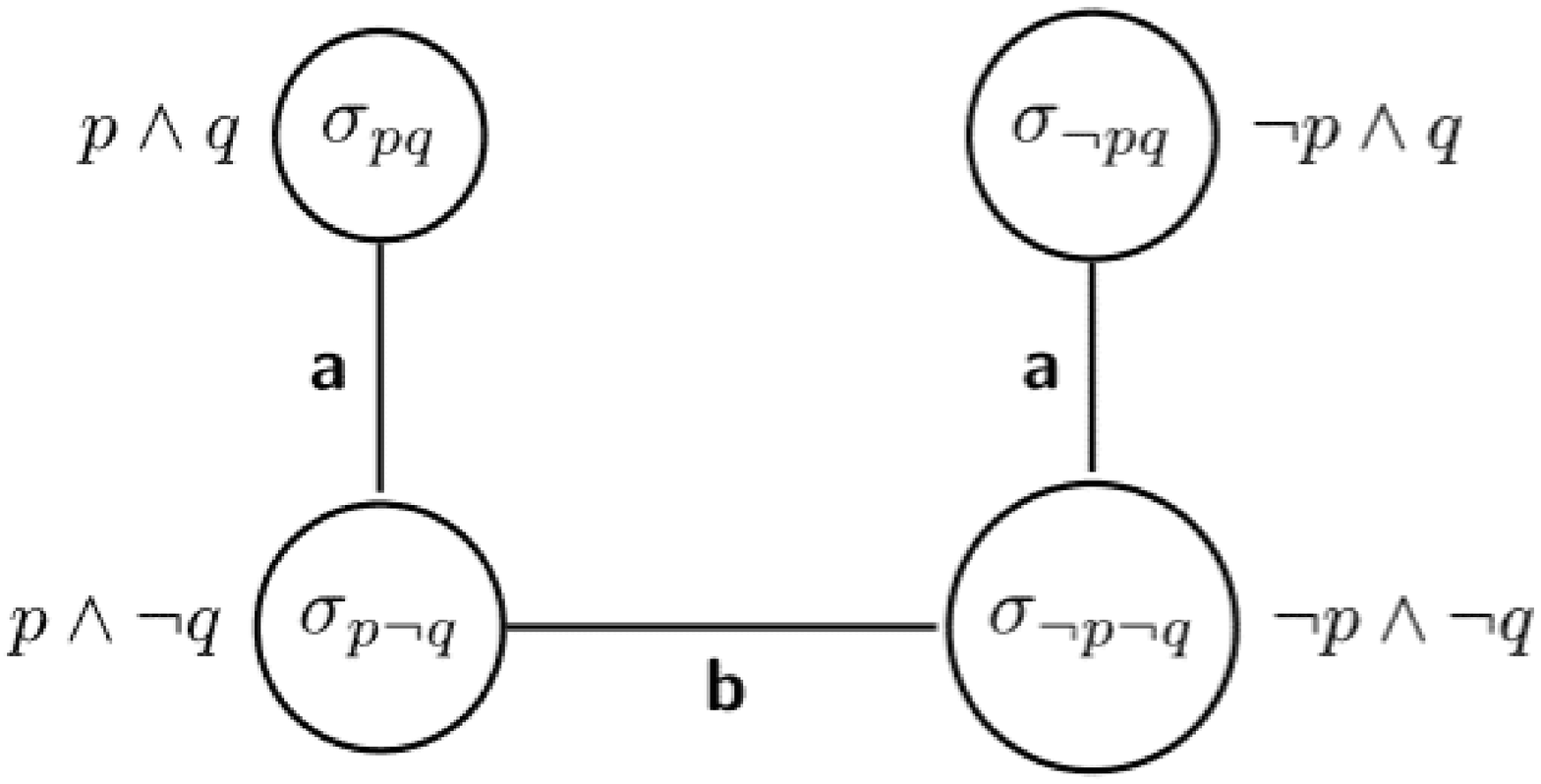}
\caption{The adjusted action model for when Anne knows whether $p$.} \label{fgr:act2}
\end{figure}

The resulting updated model will then be:

\begin{figure}
\centering
\includegraphics[scale=.35]{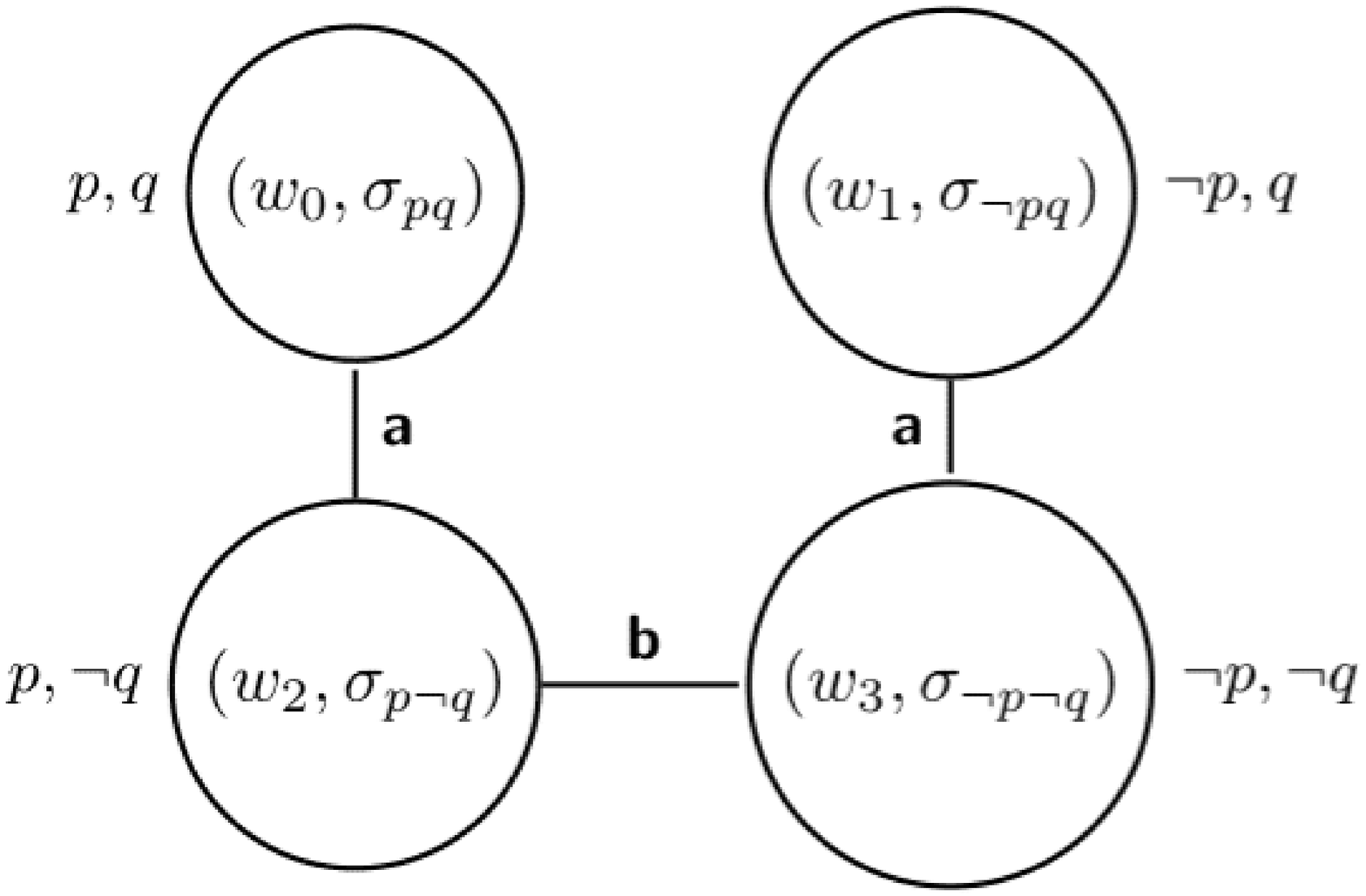}
\caption{The updated model when Anne knows whether $p$.} \label{fgr:upd2}
\end{figure}

As expected, Anne knows whether or not $p$ in the updated model, but does not know whether or not Bob speaks French. 
With suitable adjustments to the action model, we were able to represent the scenario, but consider what change necessitated these adjustments: we only changed Anne's epistemic state with respect to a proposition, $p$.
That is, a minor change to the prior knowledge of an agent required that we recast the actions represented. 
Not only does this demonstrate an oddity of the framework, but it also demonstrates a major inefficacy: following all but the most trivial of adjustments to the epistemic scenario, the action model will require revision.
In this example in particular, the construction of the action model essentially consisted in the computation of the updated model that it would produce.
}

\begin{example} \label{exa:fre2}
We alter the scenario of Examples \ref{exa:fre0} and \ref{exa:fre1} in only one respect: we assume now that Anne \textit{does} speak French. This requires no change to the intial epistemic model $M_{1}$ (since whether or not Anne speaks French is not represented explicitly in this model), but it does, intuitively, require us to re-work the action model $A_{1}$. In particular, the Anne-edge connecting $\sigma_{p}$ and $\sigma_{\lnot p}$ no longer seems appropriate, since now Anne \textit{can} understand what Carl announces.

Now we are faced with a somewhat awkward question---should there be an Anne-edge between $\sigma_{p}$ and $\sigma$? Intuitively, there should be, since $\sigma$ is supposed to encode the fact that Bob does not understand Carl's announcement, and Anne is not supposed to be able to tell whether he does or not. Similar reasoning leads us to leave the Anne-edge between $\sigma_{\lnot p}$ and $\sigma$ in place, so the resulting relation fails to be transitive.

Perhaps we could relax the requirements placed on the relations $\approx_{i}$ in action models to accommodate this type of problem, but in fact there is a deeper issue here that suggests an alternative resolution: it is easy to see that \textit{any} reflexive relation $\approx_{a}'$ for Anne on the set $\{\sigma_{p}, \sigma_{\lnot p}, \sigma\}$ produces an action model $A_{1}'$ such that the updated model $M_{1}^{A_{1}'}$ satisfies $(w_{2},\sigma) \sim_{a} (w_{3},\sigma)$. But this misrepresents the situation: in world $w_{2}$, Carl's announcement must have been that the policy passed, $p$; as such, after his announcement, Anne should no longer be uncertain about $p$.

The problem here lies with $\sigma$: it was introduced originally to represent the possibility of an ``unsuccessful" announcement by Carl. But in the present context, Carl's announcement is always at least partially successful, in that it always informs Anne of the truth value of $p$. The natural fix to this problem is another adjustment to the action model: we ``split'' the action $\sigma$ into two actions, $\sigma_{\neg p \neg q}$ and $\sigma_{p \neg q}$ (and relabel the other actions for clarity). The new action model $A_{2}$ is given in Figure \ref{fgr:act2}.
\begin{figure}[h]
\centering
\includegraphics[scale=.35]{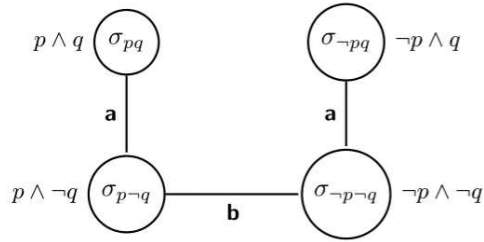}
\caption{$A_{2}$ -- The adjusted action model for when Anne speaks French.} \label{fgr:act2}
\end{figure}

These new actions $\sigma_{p\lnot q}$ and $\sigma_{\lnot p \lnot q}$ might be thought of as corresponding to situations where Carl announces $p$ and Bob does not understand, and where Carl announces $\lnot p$ and Bob does not understand, respectively. Anne can distinguish announcements based on their content, but not based on whether Bob understands them. Bob can distinguish announcements based on whether he understands them and, provided he understands them, based on their content as well. The updated model $M_{1}^{A_{2}}$ is given in Figure \ref{fgr:upd2}.
\begin{figure}[h]
\centering
\includegraphics[scale=.35]{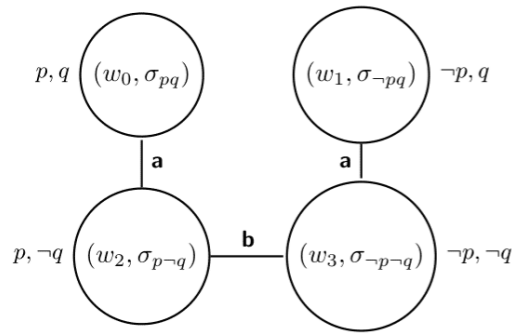}
\caption{$M_{1}^{A_{2}}$ -- The updated model when Anne speaks French.} \label{fgr:upd2}
\end{figure}
As expected, Anne has learned whether or not $p$ in the updated model (since she heard Carl), but she continues to be ignorant as to whether or not Bob speaks French (and, in turn, whether or not Bob has learned $p$). \qed
\end{example}

Note the duplication of effort in the construction of $A_{2}$. The initial epistemic model $M_{1}$ already encodes the possibilities regarding Bob's ability to speak French and Anne's uncertainty about this. Yet our action model recapitulates this structure with actions that incorporate not just what Carl says, but also whether Bob understands it or not. Moreover, once our background assumptions are fixed (such as whether Anne understands French or not), edges (and nodes!) in $A_{2}$ are determined, essentially, by examining $M_{1}$ and reading off what the uncertainties ought to be. Thus, while AL gives the impression of a clean, modular division between epistemic states and actions, in practice the two seems to be quite tangled, with unavoidable redundancies in their representations. To drive this point home, we sketch one further example.

\begin{example} \label{exa:fre3}

Consider an expanded epistemic model $M_{2}$ in which we take not just Bob's but also \textit{Anne's} knowledge of French as endogenous: that is, suppose we also wish to represent Bob as being uncertain of whether or not Anne speaks French.
A simple model of such a scenario might consist in eight worlds representing the possible combinations of truth values for primitive propostions $p$, $q$, and $r$, where $p$ and $q$ are interpreted as before and $r$ stands for the proposition ``Anne speaks French''.

The action model $A_{2}$ of Example $\ref{exa:fre2}$ is again inadequate. To see why, consider whether there ought to be an Anne-edge connecting $\sigma_{p}$ and $\sigma_{\lnot p}$. Intuitively, whether Anne can distinguish Carl announcing $p$ from Carl announcing $\lnot p$ depends on the world (i.e., it depends on whether or not Anne can speak French); it is not a fixed and unchanging truth that can be hard-coded into the model. As such, in order to capture this with a fixed action model we require, yet again, a proliferation of actions: e.g., actions of the form $\sigma_{pq\neg r}$, corresponding to something like Carl announcing $p$ and Bob but not Anne understanding it. \qed
\end{example}

These examples make it clear that the AL formalism is not well-suited to the practical task of building models to represent scenarios in which second-order knowledge is relevant: in addition to specifying the initial epistemic model, one must construct alongside it an elaborate space of actions fine-tuned to the specifics of the epistemic setting. These actions, rather than corresponding in a natural way with concrete events in the world (like Carl making an announcement), are individuated by details about agents' perceptions of them that seem less like part of the actions themselves and more like part of the background epistemics of the situation.
We turn now to our proposed solution to this problem, which simplifies the action space considerably and imports the representation of higher-order uncertainty into the epistemic model, which is designed to handle it.

\commentout{
\subsection{Recovery}

This example and theorem point to a problem with the AL framework. 
In particular, the problem seems to be with the update procedure;
	the procedure forces the updated model to be such that all agents are made aware of the structure of the action model.
While this may be avoided by encoding uncertainty into the action model, this doesn't seem faithful to the spirit of AL.
One way out might be an adjustment of the AL framework and update procedure:
	instead of taking the relations between actions to be independent of the worlds in which they are performed;
	one might take these relations to be dependent on what is true in these worlds.
This might be realized by assigning each pair of actions $(\sigma_0,\sigma_1)$ a pair of propositions $(\phi_0,\phi_1)$ which
	need to be true in a pair of worlds in order for the updated worlds resulting from the execution of the actions in the respective worlds
	to be indistinguishable.
This will be pursued in future work.
}
%
%
%
%

\section{Adjusted Semantics} \label{sec:adj}

We propose a new semantics for modeling information update that subsumes AL and is able to capture higher-order uncertainty without the proliferation of actions illustrated in the previous section. In essence, we endogenize the action model, making the distinguishability of the actions world-dependent. It is therefore natural in our revised framework to drop the notion of a separate action model altogether.

In addition to a countable collection of primitive propositions $\textsc{prop}$ and a finite set of agents $G$, fix a set $\Sigma$ of \emph{epistemic actions}
together with a \emph{precondition} function $Pre: \Sigma \to \mathcal{L}_{EL}$ specifying the precondition for each action as before.
A \textbf{dynamic model (over $(\Sigma, Pre)$)} is a tuple
$$M = \langle W, \{\sim_j : j\in G\}, \{f_j: j \in G\}, V \rangle$$
where $\langle W, \{\sim_j : j\in G\}, V \rangle$ is an epistemic model, and for each agent $j$, $f_j : W \rightarrow 2^{\Sigma \times \Sigma}$ is a function from worlds to relations on actions. Intuitively, $(\sigma_1,\sigma_2) \in f_j(w)$ means that at world $w$, if action $\sigma_1$ is performed, then agent $j$ cannot rule out $\sigma_2$ being the action performed. The $f_j$ functions are the crucial novel component of our framework which allow action indistinguishability to vary from world to world.

We will use dynamic models to interpret the language $\mathcal{L}_{DL}$ recursively defined by
$$\phi := p \, | \, \neg \phi \, | \, \phi \wedge \psi \, | \, K_j \phi \, | \, [\sigma] \phi,$$
where $[\sigma] \phi$ is read, as before, ``if action $\sigma$ can be performed, then afterwards, $\phi$ is true''.%
\footnote{We no longer need to ``tag'' the update modality with an action model since our framework does not employ action models; this turns out to have important implications in our proof of completeness.}
Of course, in order to interpret the update modalities, we must define the notion of an updated model.
Let
$$M^{+} = \langle W^{+}, \{ \sim^{+}_j: j \in G\}, \{ f^{+}_j: j\in G \}, V^{+} \rangle,%
\footnote{This notation does not specify \textit{which} action $\sigma$ the update is being performed with respect to, since (as in AL) our update procedure effectively performs all available updates simultaneously.}$$
where:
\begin{eqnarray*}
W^{+} & = & \{(w,\sigma) \: : \: w \models Pre(\sigma) \}\\
(w,\sigma) \sim^{+}_j (w',\sigma') & \dimp & w\sim_j w' \text{ and } (\sigma,\sigma')\in f_j(w)\\
f^{+}_j(w,\sigma) & = & f_j(w)\\
(w,\sigma)\in V^{+}(p) & \dimp & w\in V(p).
\end{eqnarray*}
It is easy to see that $M^{+}$ is itself a dynamic model over $(\Sigma, Pre)$ provided the relations $\sim_{j}^{+}$ are equivalence relations. In general, this need not be the case; however, the following conditions guarantee that it is.
\begin{description}
	\item[(C1)] If $w\sim_j w'$, then $f_j(w) = f_j(w')$.
	\item[(C2)] Each $f_j(w)$ is an equivalence relation on $\Sigma$.
\end{description}

\begin{proposition} \label{pro:dyn}
(C1) and (C2) together imply that $M^{+}$ is a dynamic model.
\end{proposition}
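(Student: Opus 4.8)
The plan is to reduce everything to a single verification: by the remark immediately preceding the proposition, $M^{+}$ is a dynamic model over $(\Sigma, Pre)$ as soon as we know that each relation $\sim_{j}^{+}$ is an equivalence relation on $W^{+}$ (the other components --- $W^{+}$, the $f_{j}^{+}$, and $V^{+}$ --- are already of the right type by construction, since $f_{j}^{+}(w,\sigma) = f_{j}(w)$ is again a relation on $\Sigma$ and $V^{+}: \textsc{prop} \to 2^{W^{+}}$). So the whole proof is: fix an agent $j$ and check reflexivity, symmetry, and transitivity of $\sim_{j}^{+}$, using that $\sim_{j}$ is an equivalence relation together with (C1) and (C2).

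For reflexivity, take $(w,\sigma) \in W^{+}$: then $w \sim_{j} w$ since $\sim_{j}$ is reflexive, and $(\sigma,\sigma) \in f_{j}(w)$ since $f_{j}(w)$ is reflexive by (C2); hence $(w,\sigma) \sim_{j}^{+} (w,\sigma)$. For symmetry, suppose $(w,\sigma) \sim_{j}^{+} (w',\sigma')$, i.e. $w \sim_{j} w'$ and $(\sigma,\sigma') \in f_{j}(w)$. Symmetry of $\sim_{j}$ gives $w' \sim_{j} w$; and here is the one place the argument is not completely mechanical: to conclude $(\sigma',\sigma) \in f_{j}(w')$ we invoke (C1), which (since $w \sim_{j} w'$) gives $f_{j}(w) = f_{j}(w')$, so $(\sigma,\sigma') \in f_{j}(w')$, and then symmetry of $f_{j}(w')$ from (C2) yields $(\sigma',\sigma) \in f_{j}(w')$. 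Hence $(w',\sigma') \sim_{j}^{+} (w,\sigma)$.

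For transitivity, assume $(w,\sigma) \sim_{j}^{+} (w',\sigma')$ and $(w',\sigma') \sim_{j}^{+} (w'',\sigma'')$, unpacking to $w \sim_{j} w'$, $w' \sim_{j} w''$, $(\sigma,\sigma') \in f_{j}(w)$, and $(\sigma',\sigma'') \in f_{j}(w')$. Transitivity of $\sim_{j}$ gives $w \sim_{j} w''$. Again using (C1) with $w \sim_{j} w'$ we get $f_{j}(w') = f_{j}(w)$, so $(\sigma',\sigma'') \in f_{j}(w)$; combining with $(\sigma,\sigma') \in f_{j}(w)$ and transitivity of $f_{j}(w)$ from (C2) gives $(\sigma,\sigma'') \in f_{j}(w)$. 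Hence $(w,\sigma) \sim_{j}^{+} (w'',\sigma'')$.

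I do not anticipate a genuine obstacle: the only thing to be careful about is that membership facts about $f_{j}$ attached to one world need to be transported to an indistinguishable world before symmetry/transitivity of the action relation can be applied, and that transport is exactly what (C1) is for. Once the three clauses are checked, the proposition follows from the already-noted fact that equivalence of the $\sim_{j}^{+}$ suffices to make $M^{+}$ a dynamic model; I would close by remarking that $f_{j}^{+}$ trivially still satisfies (C1) and (C2) as well, so the update can be iterated.
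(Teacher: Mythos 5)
Your proof is correct and follows essentially the same route as the paper's: reduce the claim to checking that each $\sim_{j}^{+}$ is an equivalence relation, get reflexivity from reflexivity of $\sim_{j}$ and (C2), and use (C1) to identify $f_{j}(w)$ with $f_{j}(w')$ before applying symmetry/transitivity of the action relation from (C2). The closing remark that $f_{j}^{+}$ again satisfies (C1) and (C2), so updates can be iterated, is a small correct addition beyond what the paper states explicitly.
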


\begin{proof}
As noted, it suffices to show that $\sim^{+}_j$ is an equivalence relation. Reflexivity follows immediately from reflexivity of $\sim_{j}$ and $f_{j}(w)$. For symmetry, suppose that $(w,\sigma)\sim^+_j (w',\sigma')$. Then $w \sim_j w'$ so also $w'\sim_j w$; moreover, by (C1) we have $(\sigma,\sigma') \in f_{j}(w) = f_{j}(w')$, so (C2) implies $(\sigma',\sigma) \in f_{j}(w')$, whence $(w',\sigma') \sim^+_j (w,\sigma)$. For transitivity, suppose that $(w,\sigma) \sim^+_j (w',\sigma')$ and $(w',\sigma') \sim^+_j (w'',\sigma'')$. Clearly $w \sim_{j} w''$. Moreover, we have $(\sigma,\sigma') \in f_{j}(w)$ and $(\sigma',\sigma'') \in f_{j}(w')$; by (C1) $f_{j}(w) = f_{j}(w')$, and by (C2) this relation is transitive, so we deduce that $(\sigma,\sigma'') \in f_{j}(w)$, which shows that $(w,\sigma) \sim^+_j (w'',\sigma'')$, as desired.%
\footnote{In fact, condition (C1) is stronger than it needs to be to establish Proposition \ref{pro:dyn}: since $\sim_j$ is assumed to be symmetric; weakening (C1) to:
\begin{description}
\item[(C1$^*$)]
If $w \sim_j w'$, then $f_j(w) \subseteq f_j(w')$
\end{description}
has no effect. Of course, this argument would not work for weaker epistemic logics where agents are not assumed to be negatively introspective and $\sim_{j}$ is not assumed to be symmetric. Exploring the interplay between properties of the relations $\sim_{j}$ and properties of the functions $f_{j}$ is intriguing, but takes us too far afield in the present work.}
\end{proof}

Henceforth, we assume (C1) and (C2). Now define
$$
\begin{array}{lcl}
(M,w) \models [\sigma] \phi & \textrm{ iff } & (M,w) \models Pre(\sigma) \textrm{ implies } (M^+, (w,\sigma)) \models \phi.
\end{array}
$$
This completes our specification of the new semantics for epistemic actions. It is not hard to see that update by an action model is essentially a special case of update in this framework: given an action model $A$, one simply defines each $f_{j}$ to be the constant function such that $f_{j}(w) = {\approx_{j}}$.\footnote{Successive updates by different action models $A_{1}, \ldots, A_{n}$ can be recast as repeated updates by the same action model $A$, where $A$ is an appropriate disjoint union of the $A_{k}$'s.} This directly realizes the intuition that the action model is common knowledge in AL.



%
%
%
%
\commentout{
Consider now the following strengthened version of (C1):
\begin{description}
\item[(C)]
$(\forall j \in G)(\forall w,w' \in W)(f_j(w) = f_j(w'))$.
\end{description}
Intuitively, this gives us back the original action logic framework: (C) insists that the relations between actions be invariant across worlds, so $f_j$ can be identified with a single relation on $\Sigma$. This allows us to use the $f_{j}$ functions to construct an action model that produces a semantically equivalent notion of update. To make this precise, we must first define the semantics for updates in our new setting.

We expand the language of AL to $\mathcal{L}_{DL}$ with an additional formula $\phi_{j,\sigma,\sigma'}$ for every $j, \sigma, \sigma'$; we denote by $\mathcal{L}_{EL^+}$ the fragment of $\mathcal{L}_{DL}$ which does not include the action model update operator.
The semantics for the new formulas are as follows:

\[ M,w \vDash \phi_{j,\sigma,\sigma'} \text{ iff } (\sigma,\sigma') \in f_j(w) \]
}
\commentout{
\subsection{Translation}

In this section, we develop a \emph{translation method} which takes one of our dynamic models, $M$, and returns an action model $A$ such that the update model produced by $A$ and $M_s$, the static fragment of $M$, is bisimilar to $M^+$.

Consider any $M$; we build $A = \langle \Sigma', \{\approx_j|j\in G\}, Pre' \rangle$.
For every world $(w_x,\sigma_y)\in M^+$, we create an action $\sigma_{xy} \in \Sigma$. We expand $\mathcal{L}$ to $\mathcal{L}'$ by introducing a new proposition $p_{x}$ for every world $w_x$ such that $p_x$ is only true at that world.\footnote{Note that specifying the realm of truth in $W$ for new propositions does not change $M$ in any way that doesn't respect bisimilarity. For this reason, we denote $M$ with these new propositions in the same way.}
We define $Pre'$ by $Pre(\sigma_{xy}) = p_{x}$; this implies that $\sigma_{xy}$ may only be performed at $w_x$.
Lastly, we define $\approx_j$ as follows:

\[ \sigma_{xy} \approx_j \sigma_{x'y'} \text{ iff } (\sigma_y,\sigma_{y'})\in f_j(w_x) \]

We now verify that $M^A_s$ and $M^+$ are bisimilar with respect to the language $\mathcal{L}$.
We define a relation $B$ between the worlds of the models as follows:\\[-10mm]

\[ (w_x,\sigma_{xy})B(w_x,\sigma_{y}) \]

The two nodes trivially satisfy the same propositions in $\mathcal{L}$.
For the forth condition, suppose that $(w_x,\sigma_{xy}) \sim'_j (w_{x'},\sigma_{x'y'})$.
Then: \\[-10mm]

\[ w_x \sim_j w_{x'} \] \\[-10mm]

and \\[-10mm]

\[ \sigma_{xy} \approx_j \sigma_{x'y'} \]  \\[-10mm]

So, by the way that $\approx_j$ is defined:  \\[-10mm]

\[(\sigma_y,\sigma_{y'})\in f_j(\Phi_{w_x}) \]  \\[-10mm]

From these facts we conclude that: \\[-10mm]

\[(w_x,\sigma_y)\sim^f_j (w_{x'},\sigma_{y'})\] \\[-10mm]

in $M^+$. 
For the back condition, suppose that:  \\[-10mm]

\[ (w_x,\sigma_y)\sim^f_j (w_{x'},\sigma_{y'}) \]

This implies that:

\[w_x \sim_j w_{x'}\]

and 

\[(\sigma_y,\sigma_{y'})\in f_j(w_x) \]

By the definition of $\approx$, we have that:

\[ \sigma_{xy} \approx_j \sigma_{x'y'} \]

and by the update procedure, we have that:

\[ (w_x,\sigma_{xy})\sim'_j (w_{x'},\sigma_{x'y'}) \]

This completes the proof that $B$ is a bisimulation.
Note that each of the nodes in $A$ will only be executable in one world in $M_s$.
This is by design; since $\Sigma'$ is, in effect, a copy of $W^+$, we desire that the result of the new update procedure replicate $A$ as a static model.
	
This translation demonstrates that our framework does not allow for the construction of update models which could not be the result of the old update method: for any dynamic model, $M$, there is some action model $A$ such that $M^A_s$ and $M^+$ are bisimilar.
The $A$ produced by the translation method here is clearly `overkill': to produce the correct update model, we require an action model $A$ that is already the size of the update model we wish to produce -- $A$ provides no insight into the action being performed, but rather describes the result of the action.
Note that, in some cases, we need not replicate the model entirely: if any two nodes $(w_x,\sigma_y)$ and $(w_{x'},\sigma_{y'})$ in $M^+$ share a j-edge for every $j\in G$, then we collapse $\sigma_{xy}$ and $\sigma_{x'y'}$ into the same action. 
This is because, for every agent, the two resulting nodes are indistinguishable, so those two actions are -- as performed in those worlds -- indistinguishable.

Now, one might ask, what is so bad about this proliferation of states?
Why should we care that $A$ ends up being a replication of $M^+$.
We should care because this demonstrates that action models offer no simplification of the situation resulting from the dynamics.
$M^A$ should purportedly be the thing explained by our construction of $A$ -- it should offer some insight into the structure of the dynamics.
What we find, however, is that in building $A$, we end up building $M^A$ itself -- the explanation we offer for $M^A$ turns out just to be $M^A$ itself. 
}
%
%
%
%

\subsection{Revisiting the examples}

We now revisit the problematic examples of Section \ref{sec:lim} and show that the new framework we have developed actually addresses the deficiencies we demonstrated.
Let $\Sigma = \{\sigma_{p}, \sigma_{\lnot p}\}$ and set $Pre(\sigma_{p}) = p$ and $Pre(\sigma_{\lnot p}) = \lnot p$, corresponding to the two intuitive actions of Carl announcing $p$ or announcing $\lnot p$, respectively.
Recall that in Examples \ref{exa:fre0} and \ref{exa:fre1}, Carl delivers his message in French, we assume that Anne does not know French, and we let $q$ represent the proposition that Bob knows French. We therefore define the dynamic model $\tilde{M}_{1}$ that we will use to reason about this scenario by extending the epistemic model $M_{1}$ depicted in Figure \ref{fgr:epi1}. In particular, we set
\begin{itemize}
\item
$f_a(w_0) = f_{a}(w_{1}) = f_{a}(w_{2}) = f_a(w_3) = \Sigma \times \Sigma$,
\item
$f_{b}(w_{0}) = f_{b}(w_{1}) = id_{\Sigma}$, and
\item
$f_{b}(w_{2}) = f_{b}(w_{3}) = \Sigma \times \Sigma$,
\end{itemize}
where $id_{X}$ denotes the identity relation on $X$. Thus, this dynamic model encodes the fact that Anne can never distinguish the two actions, whereas Bob can distinguish them just in case he speaks French.
It is easy to see that (the epistemic part of) $\tilde{M}_{1}^{+}$ looks exactly like the model $M_{1}^{A_{1}}$ depicted in Figure \ref{fgr:upd1}, except with the nodes $(w_{2},\sigma)$ and $(w_{3},\sigma)$ relabeled $(w_{2}, \sigma_{p})$ and $(w_{3},\sigma_{\lnot p})$, respectively. In other words, our update produces the ``right'' epistemic results, and it does so using a simple and natural set of actions and without requiring any fine-tuning of the model beyond the basic association between worlds where Bob speaks French and worlds where he can distinguish Carl's two possible announcements.


Next consider the scenario of Example \ref{exa:fre2}, which is just like the previous one except it is assumed that Anne \textit{does} speak French. To capture this, we need only change one line of the previous specifications for $\tilde{M}_{1}$:
\begin{itemize}
\item
$f_a(w_0) = f_{a}(w_{1}) = f_{a}(w_{2}) = f_a(w_3) = id_{\Sigma}.$
\end{itemize}
This corresponds directly to the assumption that Anne knows French (i.e., this is valid in the model), so she can always distinguish the two actions in question. Call this dynamic model $\tilde{M}_{1}'$. Now as before, it is straightforward to check that (the epistemic part of) $\tilde{M}_{1}'^{+}$ looks exactly like the model $M_{1}^{A_{2}}$ given in Figure \ref{fgr:upd2}, provided we replace every instance of $\sigma_{pq}$ and $\sigma_{p \lnot q}$ with $\sigma_{p}$, and every instance of $\sigma_{\lnot p q}$ and $\sigma_{\lnot p \lnot q}$ with $\sigma_{\lnot p}$. So again, without the confusion of defining new, abstract actions, our framework reproduces the intended epistemic consequences of Carl's announcement.

Finally, it is not hard to figure out how to define a dynamic model $\tilde{M}_{2}$ extending the epistemic model $M_{2}$ of Example \ref{exa:fre3}:
\begin{itemize}
\item
$f_{a}(w) = \left\{ \begin{array}{ll}
id_{\Sigma} & \textrm{if $w \models r$}\\
\Sigma \times \Sigma & \textrm{if $w \models \lnot r$,}
\end{array} \right.$
\item
$f_{b}(w) = \left\{ \begin{array}{ll}
id_{\Sigma} & \textrm{if $w \models q$}\\
\Sigma \times \Sigma & \textrm{if $w \models \lnot q$.}
\end{array} \right.$
\end{itemize}

\commentout{
We can also put our translation method to work to construct $A = \langle \Sigma, \{\approx_j | \; j \in G \}, Pre \rangle$:

\[ \Sigma = \{ \sigma_{0,p}, \sigma_{1,\neg p}, \sigma_{2,p}, \sigma_{3,\neg p} \} \]
\[ \{\approx_j | \; j \in G \} = \{ \sigma_{2,p} \approx_b \sigma_{3,\neg p} \}_{RST} \cup \{ \sigma_{0,p} \approx_a \sigma_{1,\neg p} \approx_a \sigma_{2,p} \approx_a \sigma_{3,\neg p} \}_{RST} \]
\[ Pre(\sigma_{0,p}) = p_0 \;\; Pre(\sigma_{1,\neg p}) = p_1 \;\; Pre(\sigma_{2,p}) = p_2 \;\; Pre(\sigma_{3,\neg p}) = p_3 \]

This is equivalent to the action model provided in section 2.1.
Note that since $\sigma_{2,p}$ and $\sigma_{3,\neg p}$ are indistinguishable by all agents, we can collapse them into a single node.
We cannot reduce the size of the model any further, however, because the action model framework requires that relations between actions are not world-dependent.

Now, consider the adjustment made to produce the output in figure 2.8 -- where Anne knows whether or not $p$, but is ignorant as to whether Bob speaks French.
In the dynamic framework, we need only change the static fragment of $M$ by removing edges between $w_0$ and $w_1$, and $w_2$ and $w_3$.
In the second alternative, where Anne does know whether Bob speaks French, but does not know whether or not $p$, a similar small change is required: we remove edges between $w_0$ and $w_2$, and $w_1$ and $w_3$.
In neither case are changes to the indistinguishability functions required, because the relevant \textit{actions} remain the same: note, however, that the translations of these dynamic models will result in action models with altered indistinguishability relations.
In this sense, the dynamic framework succeeds to be modular in a way which the action model framework is not.
}
%
%
%
%


\section{Axiomatization}

Our approach here follows much of the literature in seeking an axiomatization by way of \emph{reduction schemes} that effectively transform statements in $\mathcal{L}_{DL}$ into equivalent statements in some other language, and then axiomatizing that other language. Unlike PAL, however, it is easy to show that we cannot hope to reduce $\mathcal{L}_{DL}$ to the basic epistemic language $\mathcal{L}_{EL}$---it is easy to produce pairs of dynamic models whose epistemic parts are bisimilar but which satisfy different formulas of $\mathcal{L}_{DL}$ at bisimilar worlds.

Let $\mathcal{L}_{DL}^{+}$ and $\mathcal{L}_{EL}^{+}$ denote the languages $\mathcal{L}_{DL}$ and $\mathcal{L}_{EL}$, respectively, augmented with additional primitive formulas $\xi_{j,\sigma,\sigma'}$ for each $j \in G$ and $\sigma, \sigma' \in \Sigma$. Interpret these new formulas in dynamic models as follows:
$$
\begin{array}{lcl}
(M,w) \models \xi_{j,\sigma,\sigma'} & \textrm{ iff } & (\sigma,\sigma') \in f_j(w).
\end{array}
$$
Intuitively, $\xi_{j,\sigma,\sigma'}$ says that if action $\sigma$ is performed, agent $j$ cannot rule out that $\sigma'$ was the action performed. Thus, $\mathcal{L}_{EL}^{+}$ can talk about both knowledge of the agents and action indistinguishability.

Somewhat surprisingly, $\mathcal{L}_{DL}$ is reducible to $\mathcal{L}_{EL}^{+}$: every formula of $\mathcal{L}_{DL}$ is equivalent to a formula in $\mathcal{L}_{EL}^{+}$, and this equivalence can be captured by reduction schemes that allow us to provide a sound and complete axiomatization of $\mathcal{L}_{DL}^{+}$ with respect to the class of all dynamic models that satisfy (C1) and (C2). It can be shown, however, that $\mathcal{L}_{DL}^{+}$ is strictly more expressive than $\mathcal{L}_{DL}$, so this result leaves something to be desired. This is the subject of ongoing research.

\subsection{Soundness and completeness of $\mathcal{L}_{EL}^{+}$}

Fix a finite set of actions $\Sigma$ together with a precondition function $Pre: \Sigma \to \mathcal{L}_{EL}$. We begin by axiomatizing $\mathcal{L}_{EL}^{+}$ and then turn to $\mathcal{L}_{DL}^{+}$. Consider the following axioms:

\begin{enumerate}
	\item All instantiations of propositional tautologies
	\item $K_a(\phi \rightarrow \psi) \rightarrow (K_a\phi \rightarrow K_a\psi)$
	\item $K_a\phi \rightarrow \phi$
	\item $K_a\phi \rightarrow K_aK_a\phi$
	\item $\neg K_a \phi \rightarrow K_a \neg K_a \phi$
	\item $\xi$ axioms:
	\begin{enumerate}
		\item $\xi_{j,\sigma,\sigma}$
		\item $\xi_{j,\sigma,\sigma'} \rightarrow \xi_{j,\sigma',\sigma}$
		\item $\xi_{j,\sigma,\sigma'} \rightarrow (\xi_{j,\sigma',\sigma''} \rightarrow \xi_{j,\sigma,\sigma''})$
	\end{enumerate}
	\item Interaction axiom:%
	\footnote{In fact, the scheme $\lnot \xi_{j,\sigma,\sigma'} \rightarrow K_j \lnot \xi_{j,\sigma,\sigma'}$ is also valid, but it can be proved from 7(a) together with the $\mathsf{S5}$ axioms for $K_{j}$.}
	\begin{enumerate}
		\item $\xi_{j,\sigma,\sigma'} \rightarrow K_j \xi_{j,\sigma,\sigma'}$
	\end{enumerate}
	\item From $\phi$ and $\phi \rightarrow \psi$, infer $\psi$.
	\item From $\phi$, infer $K_j \phi$. 

\end{enumerate}

Soundness is easy,
while completeness can be established by a fairly standard canonical model construction.
Let $M^c = \langle W^c, \{ \sim_j^c \: : \: j \in G \}, \{ f_j^c \: : \: j\in G \}, V^c \rangle$ be defined as follows:

\begin{itemize}
	\item $M^c = \{ \Gamma \subseteq \mathcal{L}_{EL}^{+} \: : \: \Gamma \text{ is maximally consistent} \}$
	\item $\Gamma \sim^c_j \Delta$ iff $ \{ K_j \phi \: : \: K_j \phi \in \Gamma \} = \{ K_j \phi \: : \: K_j \phi \in \Delta \}$
	\item $(\sigma,\sigma')\in f^c_j(\Gamma)$ iff $\xi_{j,\sigma,\sigma'} \in \Gamma$
	\item $\Gamma \in V^c(p)$ iff $p \in \Gamma$.
\end{itemize}

We must show that $M^c$ is a dynamic model
over $(\Sigma, Pre)$.
That $\sim^c_j$ is an equivalence relation follows from standard proofs of the completeness of $\mathsf{S5}$.
We show that $f^c_j$ satisfies conditions (C1) and (C2).

(C1)
Suppose that $\Gamma \sim^c_j \Delta$; we show that $f^c_j(\Gamma)=f^c_j(\Delta)$.
Since $\sim^c_j$ is an equivalence relation, suppose without loss of generality that $(\sigma,\sigma')\in f^c_j(\Gamma)$. 
Then $\xi_{j,\sigma,\sigma'} \in \Gamma$.
By axiom 7a, $K_j \xi_{j,\sigma,\sigma'} \in \Gamma$; by the way that $\sim_j^c$ is defined, we know then that $K_j \xi_{j,\sigma,\sigma'} \in \Delta$.
Finally, by axiom 3, we have that $\xi_{j,\sigma,\sigma'} \in \Delta$, which gives the desired result that $(\sigma,\sigma')\in f^c_j(\Delta)$.

(C2) follows easily from the $\xi$ axioms.

The equivalence
$$(M^c, \Gamma) \models \phi \text{ iff } \phi \in \Gamma$$
(i.e., the Truth Lemma) is proved in the standard way by structural induction.

\subsection{Soundness and completeness of $\mathcal{L}_{DL}^{+}$}
Let $pre_\sigma$ abbreviate the epistemic formula $Pre(\sigma)$, and consider the following additional axioms:

\begin{enumerate}
	\item[10.] Action axioms:
		\begin{enumerate}
		\item $[\sigma](\phi \rightarrow \psi) \rightarrow ([\sigma]\phi \rightarrow [\sigma]\psi)$
		\item $[\sigma]p \leftrightarrow (pre_\sigma \rightarrow p)$
		\item $[\sigma]\neg \phi \leftrightarrow (pre_\sigma \rightarrow \neg [\sigma]\phi)$
		\item $[\sigma](\phi \wedge \psi) \leftrightarrow ([\sigma]\phi \wedge [\sigma]\psi)$
		\item $[\sigma]K_a \phi \leftrightarrow \big(pre_\sigma \rightarrow \displaystyle \bigwedge_{\sigma' \in \Sigma} (\xi_{a,\sigma,\sigma'} \rightarrow K_a[\sigma'] \phi)\big)$		
		\end{enumerate}
		\item[11.] From $\phi$, infer $[\sigma]\phi$.

\end{enumerate}

The soundness of most of the above axioms is immediate; we show the soundness of (10e).

($\Rightarrow$)
Suppose that $(M,w) \models [\sigma] K_a \phi$ and assume that $(M,w) \models pre_\sigma$ (otherwise the equivalence is trivial). 
Let $\sigma' \in \Sigma$ be such that $w \vDash \xi_{a,\sigma,\sigma'}$ and consider any world $w'$ with $w \sim_{a} w'$.
We wish to show that $w' \vDash [\sigma'] \phi$.
We know that $(M^{+},(w,\sigma))\vDash K_a \phi$.
Since $w \vDash \xi_{a,\sigma,\sigma'}$, we also know that $(\sigma,\sigma')\in f_a(w)$,
so $(w,\sigma) \sim_a^{+} (w',\sigma')$;
it follows that $(w',\sigma') \vDash \phi$, which means that $w'\vDash [\sigma']\phi$.

($\Leftarrow$)
Suppose that $w \vDash pre_\sigma \rightarrow \bigwedge_{\sigma' \in \Sigma} (\xi_{a,\sigma,\sigma'} \rightarrow K_a[\sigma'] \phi)$, and that $w\vDash pre_\sigma$.
We wish to show that $(w,\sigma)\vDash K_a \phi$.
Consider any $(w',\sigma') \sim_a^{+} (w,\sigma)$; we will show that $(w',\sigma') \vDash \phi$.
Since $(\sigma,\sigma')\in f_a(w)$, we know that $w\vDash \xi_{a,\sigma,\sigma'}$, 
so by assumption, $w \vDash K_a[\sigma'] \phi$.
Then since $w\sim_a w'$, we have that $w'\vDash [\sigma'] \phi$, which implies that $(w',\sigma')\vDash \phi$, as desired.


For completeness, consider the following translation:

\begin{eqnarray*}
t(p) & = & p\\
t(\xi_{j,\sigma,\sigma'}) & = & \xi_{j,\sigma,\sigma'}\\
t(\neg \phi) & = & \neg t(\phi)\\
t(\phi \wedge \psi) & = & t(\phi) \wedge t(\psi)\\
t(K_a \phi ) & = & K_a t(\phi)\\
t([\sigma] p ) & = & pre_\sigma \rightarrow p\\
t([\sigma] \neg \phi) & = & pre_\sigma \rightarrow \neg t([\sigma] \phi)\\
t([\sigma] (\phi \wedge \psi)) & = & t([\sigma] \phi) \wedge t([\sigma] \psi)\\
t([\sigma] K_a \phi) & = & pre_\sigma \rightarrow \displaystyle \bigwedge_{\sigma' \in \Sigma} (\xi_{a,\sigma,\sigma'} \rightarrow K_a \, t([\sigma'] \phi))\\
t([\sigma][\sigma']\phi) & = & t([\sigma]t([\sigma']\phi))
\end{eqnarray*}

\begin{proposition}
For all formulas $\phi \in \mathcal{L}_{DL}^{+}$, $t(\phi)$ is provably equivalent to $\phi$ and $t(\phi) \in \mathcal{L}_{EL}^{+}$.
\end{proposition}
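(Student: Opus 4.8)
The plan is to prove both claims by induction on the structure of $\phi \in \mathcal{L}_{DL}^{+}$, with the two claims---provable equivalence of $t(\phi)$ and $\phi$, and membership $t(\phi) \in \mathcal{L}_{EL}^{+}$---established simultaneously since the membership claim is needed to justify that the translation clauses for $[\sigma]$-formulas eventually terminate in the $\mathcal{L}_{EL}^{+}$ fragment. However, a plain structural induction on $\phi$ will not quite work for the clauses $t([\sigma]\psi)$, because, e.g., $t([\sigma]K_a\psi)$ is defined in terms of $t([\sigma']\psi)$, where $[\sigma']\psi$ is not a subformula of $[\sigma]K_a\psi$ in the usual sense---it has the same modal operator depth applied to a smaller $\psi$. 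So first I would set up an appropriate well-founded measure: something like the lexicographic combination of (number of $[\sigma]$ operators in $\phi$) and (structural complexity of $\phi$), or more simply a complexity measure $c$ in which $c([\sigma]\psi)$ strictly dominates $c$ of each of the right-hand sides of the translation clauses. The key observations making this work: in $t([\sigma]K_a\psi)$ the inner calls are to $t([\sigma']\psi)$ with $\psi$ strictly simpler; in $t([\sigma][\sigma']\psi)$ we recurse on $t([\sigma']\psi)$ first (one fewer nested $[\sigma]$) and then apply $t([\sigma]\cdot)$ to the result, which by the inductive hypothesis lies in $\mathcal{L}_{EL}^{+}$ and hence contains no $[\sigma'']$ operators, so the outer call reduces to the Boolean/epistemic cases. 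I would state this measure carefully and verify monotonicity for each clause.

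Next I would carry out the induction. The base cases ($p$ and $\xi_{j,\sigma,\sigma'}$) are immediate: $t$ is the identity and both formulas are already in $\mathcal{L}_{EL}^{+}$. The Boolean and $K_a$ cases are routine: for membership, $t(\neg\phi) = \neg t(\phi)$, $t(\phi\wedge\psi) = t(\phi)\wedge t(\psi)$, $t(K_a\phi) = K_a t(\phi)$ are in $\mathcal{L}_{EL}^{+}$ by the inductive hypothesis; for provable equivalence, use propositional reasoning (axiom 1) and the rule of replacement of provable equivalents (which follows from axioms 1, 2, 8, 9 in the standard way)---I would note that this replacement principle should be recorded as a lemma before the main induction. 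For the $[\sigma]$-clauses, the provable equivalence of $\phi$ with $t(\phi)$ is witnessed directly by the action axioms 10(b)--10(e): e.g., $[\sigma]K_a\psi \leftrightarrow (pre_\sigma \to \bigwedge_{\sigma'}(\xi_{a,\sigma,\sigma'} \to K_a[\sigma']\psi))$ is axiom 10(e), and then by the inductive hypothesis each $[\sigma']\psi$ is provably equivalent to $t([\sigma']\psi)$, so replacement gives equivalence with the fully-translated right-hand side; membership then follows since each $t([\sigma']\psi) \in \mathcal{L}_{EL}^{+}$ by the inductive hypothesis and $\mathcal{L}_{EL}^{+}$ is closed under $\bigwedge$, $\to$, and $K_a$. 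The clause $t([\sigma][\sigma']\psi) = t([\sigma]t([\sigma']\psi))$ is handled by first applying the inductive hypothesis to $[\sigma']\psi$ (smaller in the measure) to get $[\sigma']\psi$ provably equivalent to $t([\sigma']\psi) \in \mathcal{L}_{EL}^{+}$, hence (using rule 11 to push $[\sigma]$ through and replacement) $[\sigma][\sigma']\psi$ provably equivalent to $[\sigma]t([\sigma']\psi)$; since $t([\sigma']\psi)$ contains no update modalities, the remaining call $t([\sigma]t([\sigma']\psi))$ falls under the already-treated cases and the measure has strictly decreased.

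The main obstacle is the termination/well-foundedness issue flagged above: making precise why the mutual recursion in the $[\sigma]$-clauses actually bottoms out, and choosing a measure that simultaneously handles the "$[\sigma]$ distributing over $K_a$ onto a simpler formula" pattern and the "$[\sigma][\sigma']$ collapsing after the inner translation lands in $\mathcal{L}_{EL}^{+}$" pattern. Once the right measure is fixed and one checks that every right-hand side of a $t$-clause is strictly smaller than its left-hand side, the rest is a routine verification: provable equivalence comes term-by-term from axioms 10(b)--10(e) plus the replacement lemma, and the $\mathcal{L}_{EL}^{+}$-membership is immediate from closure of that fragment under the connectives appearing in the translation. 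I would also remark that, with this proposition in hand, completeness of $\mathcal{L}_{DL}^{+}$ follows from completeness of $\mathcal{L}_{EL}^{+}$ (already established via the canonical model $M^c$): any valid $\phi \in \mathcal{L}_{DL}^{+}$ is provably equivalent to $t(\phi) \in \mathcal{L}_{EL}^{+}$, which is then valid, hence provable, hence $\phi$ is provable.
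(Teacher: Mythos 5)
Your proposal is correct and matches the paper's proof in essence: the paper carries out exactly the lexicographic induction you describe, as an outer induction on action nesting depth with an inner subinduction on a structural weight, handling the $[\sigma]p$, $[\sigma]\lnot$, $[\sigma]\wedge$, $[\sigma]K_a$ cases via axioms 10(b)--(e) and the $[\sigma][\sigma']$ case by first translating the inner formula into $\mathcal{L}_{EL}^{+}$ and then invoking the outer hypothesis (depth drops to 1), with replacement under $[\sigma]$ justified by axiom 10(a) and rule 11. No substantive differences to note.
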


\begin{proof}
We proceed by induction on the \emph{action nesting depth} of $\phi$, defined in the obvious way:
\begin{eqnarray*}
d(p) & = & 0\\
d(\xi_{j,\sigma,\sigma'}) & = & 0\\
d(\neg \phi) & = & d(\phi)\\
d(\phi \wedge \psi) & = & \max(d(\phi),d(\psi))\\
d(K_j \phi) & = & d(\phi)\\
d([\sigma]\phi) & = & d(\phi) + 1.
\end{eqnarray*}
The case $d(\phi)=0$ is immediate. So suppose the result holds for all formulas with nesting depth less than $n$, and let $\phi \in \mathcal{L}_{DL}^{+}$ be such that $d(\phi) \leq n$.

We now proceed via a subinduction on the \textit{weight} of $\phi$, defined as follows:
\begin{eqnarray*}
w(p) & = & 1\\
w(\xi_{j,\sigma,\sigma'}) & = & 1\\
w(\neg \phi) & = & w(\phi) + 1\\
w(\phi \wedge \psi) & = & max(w(\phi),w(\psi)) + 1\\
w(K_j \phi) & = & w(\phi) + 1\\
w([\sigma]\phi) & = & w(\phi) + 1
\end{eqnarray*}
The base case where $w(\phi) = 1$ is again immediate. So suppose inductively the result holds for formulas of weight less than $w(\phi)$. The proof now breaks into cases depending on the structure of $\phi$, since this determines which recursive clause of the definition of $t$ is relevant. The inducutive steps corresponding to the Boolean connectives and the $K_{a}$ modalities are straightforward, so we move to the case where $\phi = [\sigma] \psi$; this in turn naturally breaks into several subcases depending on the structure of $\psi$:
\begin{itemize}
\item
If $\psi = p$, then $t(\phi) = t([\sigma]p) = pre_\sigma \lthen p$ and we are done by axiom (10b).
\item
If $\psi = \lnot \chi$, then $t(\phi) = t([\sigma]\lnot \chi) = pre_\sigma \lthen \lnot t([\sigma]\chi)$. Clearly $w([\sigma]\chi) < w(\phi)$, so by the inductive hypothesis we know that $t([\sigma]\chi) \in \mathcal{L}_{EL}^{+}$ and is provably equivalent to $[\sigma]\chi$. It follows immediately that $t(\phi) \in \mathcal{L}_{EL}^{+}$ and, by axiom (10c), that $t(\phi)$ is provably equivalent to $\phi$.
\item
If $\psi = \chi_1 \land \chi_2$, then $t(\phi) = t([\sigma]\chi_1) \land t([\sigma]\chi_2)$. Clearly $w([\sigma]\chi_1) < w(\phi)$ and $w([\sigma]\chi_2) < w(\phi)$, so by the inductive hypothesis we know that $t([\sigma]\chi_1) \in \mathcal{L}_{EL}^{+}$ and $t([\sigma]\chi_2) \in \mathcal{L}_{EL}^{+}$, and they are provably equivalent to $[\sigma]\chi_1$ and $[\sigma]\chi_2$, respectively. It follows immediately that $t(\phi) \in \mathcal{L}_{EL}^{+}$ and, by axiom (10d), that $t(\phi)$ is provably equivalent to $\phi$.
\item
If $\psi = K_{j} \chi$, then $t(\phi) = t([\sigma]K_{j} \chi) = pre_\sigma \rightarrow \bigwedge_{\sigma' \in \Sigma} (\xi_{j,\sigma,\sigma'} \rightarrow K_j \, t([\sigma'] \chi))$. Clearly, for each $\sigma' \in \Sigma$, $w([\sigma']\chi) < w(\phi)$, so by the inductive hypothesis we know that $t([\sigma']\chi) \in \mathcal{L}_{EL}^{+}$ and is provably equivalent to $[\sigma']\chi$. It follows immediately that $t(\phi) \in \mathcal{L}_{EL}^{+}$ and, by axiom (10e), that $t(\phi)$ is provably equivalent to $\phi$, as desired.
\item
Finally, if $\psi = [\sigma']\chi$, then $t(\phi) = t([\sigma][\sigma']\chi) = t([\sigma]t([\sigma']\chi))$. Let $\tilde{\psi} = t([\sigma']\chi)$; clearly $w([\sigma']\chi) < w(\phi)$, so by the inductive hypothesis we know that $\tilde{\psi} = t([\sigma']\chi) \in \mathcal{L}_{EL}^{+}$ and $\tilde{\psi}$ is provably equivalent to $[\sigma']\chi$. So we have $t(\phi) = t([\sigma]t([\sigma']\chi)) = t([\sigma]\tilde{\psi})$. Now the weight of $\tilde{\psi}$ may be very large, so we can't apply our inner inductive hypothesis again here. However, since $\tilde{\psi} \in \mathcal{L}_{EL}^{+}$, it is easy to see that $d([\sigma]\tilde{\psi}) = 1 < d(\phi)$, so we can appeal to our \textit{outer} inductive hypothesis to conclude that $t(\phi) = t([\sigma]\tilde{\psi}) \in \mathcal{L}_{EL}^{+}$ and is provably equivalent to $[\sigma]\tilde{\psi}$. Moreover, since $\tilde{\psi}$ is provably equivalent to $[\sigma']\chi$, using axiom (10a) it is easy to show that $[\sigma]\tilde{\psi}$ is provably equivalent to $[\sigma][\sigma']\chi$, whence $t(\phi)$ is provably equivalent to $\phi$, as desired. \qedhere
\end{itemize}
\end{proof}

Completeness is an immediate corollary.

\commentout{
This translation produces provably equivalent formulas according to the axiom system above.
We show this by assigning each formula a weight, and inducting over those weights.

We need to show that the weight is reduced or not increased in all steps in the translation. For most cases, it is obvious. Consider the case for $[\sigma]K_j\phi$:

\begin{eqnarray*}
w([\sigma]K_j \phi) & = & (10 + w(\sigma))*(1 + w(\phi))\\
& = & 10 + w(\sigma) + 10*w(\phi) + w(\sigma)w(\phi)\\
& > & |\Sigma| + 8 + (10 + w(\sigma))*w(\phi)\\
& \geq & w(pre_\sigma \rightarrow \bigwedge\limits_{\sigma' \in \Sigma} \xi_{(a,\sigma,\sigma')} \rightarrow K_a(pre_{\sigma'} \rightarrow [\sigma] \phi ).
\end{eqnarray*}

Note that in the case of $[\sigma][\sigma']\phi$, the translation does not \textit{reduce} the weight of the formula, but keeps it the same. 
Since there may only be finitely many iterations of the action operator, this is not a problem; the translation will `bottom out' at some point.

}

\bibliographystyle{eptcs}
\bibliography{abjorndahl}

\commentout{

}

\end{document}